\pgfplotsset{compat=1.9}
\tikzset{>=latex}
\newcommand{\blkdiag}{\mathop{\rm blkdiag}}
\newcommand{\argmin}{\mathop{\rm argmin}}
\newcommand{\argmax}{\mathop{\rm argmax}}
\newcommand{\norm}[1]{\left\lVert#1\right\rVert}
\newcommand{\mnorm}[1]{{\left\vert\kern-0.25ex\left\vert\kern-0.25ex\left\vert #1 
    \right\vert\kern-0.25ex\right\vert\kern-0.25ex\right\vert}}
\newtheorem{definition}{Definition} 
\newtheorem{lemma}{Lemma}
\newtheorem{remark}{Remark}
\newtheorem{proposition}{Proposition}
\newtheorem{assumption}{Assumption}
\newcommand{\ie}{{\it i.e.}}
\definecolor{mygreen}{rgb}{0, 0.5, 0}
\definecolor{myyellow}{rgb}{1, 0.75, 0}
\title{\LARGE \bf Online Poisoning Attacks Against Data-Driven Predictive Control
}
\author{
Yue Yu, Ruihan Zhao, Sandeep Chinchali, and Ufuk Topcu 
\thanks{
Y. Yu, R. Zhao, and U. Topcu are with the Oden Institute for Computational Engineering and Sciences, The University of Texas at Austin, TX, 78712, USA (emails:  yueyu@utexas.edu,\,ruihan.zhao@utexas.edu,\,utopcu@utexas.edu). S. Chinchali is with the Department of Electrical and Computer Engineering, The University of Texas at Austin, TX, 78712, USA (email: sandeepc@utexas.edu). 
}%
}
\begin{document}

\maketitle
\thispagestyle{empty}
\pagestyle{empty}

\begin{abstract}
Data-driven predictive control (DPC) is a feedback control method for systems with unknown dynamics. It repeatedly optimizes a system's future trajectories based on past input-output data. We develop a numerical method that computes poisoning attacks that inject additive perturbations to the online output data to change the trajectories optimized by DPC. This method is based on implicitly differentiating the solution map of the trajectory optimization in DPC. We demonstrate that the resulting attacks can cause an output tracking error one order of magnitude higher than random perturbations in numerical experiments.  
\end{abstract}

\section{Introduction}
\label{sec: introduction}

Data-driven predictive control (DPC) is a feedback control method for systems with unknown dynamics \cite{coulson2019data,allibhoy2020data,dorfler2022bridging}. It combines the idea of Willems' fundamental lemma and model predictive control: the former gives a parameterization of the system's future input-output trajectories using linear functions of the past input-output data \cite{willems2005note,van2020willems,yu2021controllability}, and the latter gives a feedback controller that repeatedly optimizes the system's future input-output trajectories \cite{mayne2000constrained,mayne2014model}. DPC has been successful for various systems, including quadrotors \cite{elokda2021data}, power converters \cite{huang2021decentralized}, as well as building heating, ventilation, and air conditioning \cite{chinde2022data}.

Since DPC relies heavily on data, it is susceptible to adversarial data perturbations, or \emph{data poisoning attacks} \cite{goodfellow2014explaining,kurakin2016adversarial,sharma2019attacks,agarwal2022task}. On the other hand, it remains unclear how vulnerable DPC is against data poisoning attacks. The results in \cite{alpago2020extended,coulson2021distributionally} show that DPC is robust against zero-mean stochastic noise in data. But they do not extend to deterministic data poisoning attacks. Meanwhile, results on data poisoning attacks against state estimators \cite{pajic2016attack,miao2016coding,pajic2017design,jovanov2019relaxing} and virtual reference feedback controllers \cite{russo2021poisoning,russo2021data} do not consider DPC, or any trajectory-optimization-based controllers. To our best knowledge, data poisoning attacks against DPC have received little if any attention.  

We formulate a data-poisoning attack problem in DPC, where an \emph{attacker} computes bounded additive perturbations to the online output data---which lack the thorough validation typically available for offline data and, compared with input data, are sensitive to noninvasive modifications of the sensors' physical environment \cite{pajic2016attack}---to change the trajectories optimized by DPC. We show that computing a poisoning attack is a bilevel optimization: the lower level optimizes the trajectory in DPC, and the upper level optimizes the attack. 

Furthermore, we develop an efficient numerical method to compute poisoning attacks against DPC by approximating the bilevel optimization using a single-level convex optimization. We construct this approximation in two steps. First, we transform the bilevel optimization into a single-level nonconvex optimization using the solution map of the lower-level trajectory optimization. Second, we approximate the single-level nonconvex optimization using a convex one by implicitly differentiating said solution map.

Finally, we demonstrate the effectiveness and efficiency of the proposed method in attacking DPC for a linear oscillating masses system and a nonlinear quadrotor system in PyBullet, a high-fidelity robotics simulator \cite{panerati2021learning}. Our numerical experiments show that the performance of DPC is more sensitive to data-poisoning attacks than random noise: the former can cause an output tracking error one order of magnitude higher than that of the latter. Furthermore, the proposed method is more efficient in implicit differentiation than CVXPYlayers, a state-of-the-art differentiation toolbox \cite{agrawal2019differentiable}. In our experiments, the least-squares problem solved in the proposed method---which is the main computational task in implicit differentiation---is about half the size of the one solved in CVXPYlayers. 

Our work complements the qualitative results in robust DPC. Although regularization can qualitatively stabilize DPC against data perturbations \cite{berberich2020data,berberich2022linear}, how to quantitatively compute the regularization parameters is, to our best knowledge, still an open question. Our results enable quantitative evaluation and potentially automated search of these regularization parameters based on their stabilizing performance under poisoning attacks.

\paragraph*{Notation} We let \(\mathbb{R}\), \(\mathbb{R}_+\), \(\mathbb{R}_{++}\), and \(\mathbb{N}\) denote the set of real, nonnegative real, positively real, and positive integer numbers, respectively. Given \(m, n\in\mathbb{N}\), we let \(\mathbb{R}^n\) and \(\mathbb{R}^{m\times n}\) denote the set of \(n\)-dimensional real vectors and \(m\times n\) real matrices, We let \(0_n\)\(\) and \(I_n\) denote the \(n\)-dimensional zero vector and the \(n\times n\) identity matrix, respectively. Given a square real matrix \(A\in\mathbb{R}^{n\times n}\), we let \(A^\top\), \(A^{-1}\), and \(A^\dagger\) denote the transpose, the inverse, and the Moore–Penrose inverse of matrix \(A\), respectively. Given a symmetric and positive semidefinite matrix \(M\in\mathbb{R}^{n\times n}\) and \(x\in\mathbb{R}^n\), we let \(\norm{x}\coloneqq \sqrt{x^\top x}\) and \(\norm{x}_M\coloneqq \sqrt{x^\top M x}\). We let \(\partial G(x)\in\mathbb{R}^{m\times n}\) denote the Jacobian matrix of function \(G\) evaluated at \(x\in\mathbb{R}^n\). We say a closed set \(\mathbb{D}\subset\mathbb{R}^n\) is a closed convex set if \(\alpha x+(1-\alpha)y\in\mathbb{D}\) for all \(x, y\in\mathbb{D}\) and \(\alpha\in[0, 1]\).
The \emph{projection} of \(x\in\mathbb{R}^n\) onto a closed convex set \(\mathbb{D}\subset\mathbb{R}^n\) is a function \(\Pi_{\mathbb{D}}:\mathbb{R}^n\to\mathbb{D}\) where \(
    \Pi_{\mathbb{D}}(x)\coloneqq \underset{x'\in\mathbb{D}}{\mbox{argmin}}\, \norm{x'-x}\).

\section{Online poisoning attack problem in data-driven predictive control}
\label{sec: stackelberg}

We introduce the data poisoning attack problem in data-driven predictive control (DPC). We will first revisit the basics of DPC, then introduce a bilevel optimization that models of data poisoning attacks against DPC.

\subsection{Data-driven predictive control}
We will briefly review the basics of
data-driven predictive control, a control law for unknown dynamical systems based on data and optimization.

\subsubsection{Input and output trajectories}
Consider a discrete time dynamical system with \(n_u\) inputs and \(n_y\) outputs. The system's input and output at time \(j\in\mathbb{N}\) are denoted by \(u_j\in\mathbb{R}^{n_u}\) and \(y_j\in\mathbb{R}^{n_y}\), respectively. 

At each sampling time \(k\), DPC requires the knowledge of an online input-output trajectories generated by the system, denoted by \(\{u_{k-\sigma}, \ldots, u_{k-1}\}\) and \(\{y_{k-\sigma},\ldots y_{k-1}\}\), where \(\sigma\in\mathbb{N}\) is the upper bound of the lag of the system \cite{coulson2019data}. Intuitively, \(\sigma\) is the number of input-output pairs needed to pinpoint the state of the system. In addition, DPC also requires the knowledge of the input Hankel matrix \(U\in\mathbb{R}^{(\ell+\sigma)n_u\times n_g}\) and output Hankel matrix \(Y\in\mathbb{R}^{(\ell+\sigma)n_y\times n_g}\), where \(\ell\in\mathbb{N}\) is the planning horizon in DPC, and \(n_g\in\mathbb{N}\) is determined by the amount of offline data. See \cite{van2020willems,yu2021controllability} for a detailed discussion on constructing Hankel matrices using offline input-output trajectories.

\subsubsection{Data-driven trajectory optimization}
We now introduce the trajectory optimization problem used in data-driven predictive control. To this end, we first introduce the following notation:
\begin{equation}\label{eqn: Hankel}
    \begin{aligned}
        &u_{\text{ini}}\coloneqq \begin{bsmallmatrix}
        u_{k-\sigma} \\  \vdots \\ u_{k-1}
        \end{bsmallmatrix},  y_{\text{ini}}\coloneqq \begin{bsmallmatrix}
        y_{k-\sigma} \\ \vdots \\ y_{k-1}
        \end{bsmallmatrix},  \begin{bmatrix}
        U_{\text{p}}\\
        U_{\text{f}}
        \end{bmatrix} =U, \begin{bmatrix}
        Y_{\text{p}}\\
        Y_{\text{f}}
        \end{bmatrix} =Y,
    \end{aligned}
\end{equation}
where \(U_{\text{p}}\in\mathbb{R}^{\sigma n_u\times n_g}\) and \(U_{\text{f}}\in\mathbb{R}^{\ell n_u\times n_g}\)  are partitions of \(U\),  \(Y_{\text{p}}\in\mathbb{R}^{\sigma n_y\times n_g}\) and \(Y_{\text{f}}\in\mathbb{R}^{\ell n_y\times n_g}\) are partitions of \(Y\). 

At each discrete time \(k\in\mathbb{N}\), the data-driven predictive controller computes a length-\(\ell\) future input and output trajectory of the system--- denoted by \(u\coloneqq \begin{bsmallmatrix} 
    u_k^\top & u_{k+1}^\top & \cdots & u_{k+\ell-1}^\top \end{bsmallmatrix}^\top\) and \(y\coloneqq \begin{bsmallmatrix} 
    y_k^\top & y_{k+1}^\top & \cdots & y_{k+\ell-1}^\top \end{bsmallmatrix}^\top\),
respectively---by solving the following optimization problem:
\begin{equation}\label{opt: DeePC}
    \begin{array}{ll}
        \underset{u, y,
        g}{\mbox{minimize}} &  \frac{1}{2}\norm{y-\hat{y}}^2_Q+\frac{1}{2}\norm{u-\hat{u}}^2_R+\lambda_g \norm{Mg}^2\\
        &+\lambda_s\norm{Y_{\text{p}}g-y_{\text{ini}}}^2\\
         \mbox{subject to} & u_{\text{ini}}=U_{\text{p}}g,\, u = U_{\text{f}} g,\, y = Y_{\text{f}} g,\, u\in\mathbb{U}, \, y\in\mathbb{Y}.
        \end{array}
\end{equation}
where
\(
    M\coloneqq I_{n_g}-\begin{bsmallmatrix}
U_p\\
Y_p\\
U_f
\end{bsmallmatrix}^\dagger \begin{bsmallmatrix}
U_p\\
Y_p\\
U_f
\end{bsmallmatrix}
\) is a weighting matrix inspired by system identification and has proven to be more effective than identity weighting \cite{dorfler2022bridging} ; matrix \(Q\in\mathbb{R}^{\ell n_u\times \ell n_u}\) and matrix \(R\in\mathbb{R}^{\ell n_y\times \ell n_y}\) are both symmetric and positive semidefinite; \(\lambda_g, \lambda_s\in\mathbb{R}_+\) are regularization weights; set \(\mathbb{U}\subset\mathbb{R}^{\ell n_u}\) and set \(\mathbb{Y}\subset\mathbb{R}^{\ell n_y}\) are the feasible set of input and output trajectories, respectively; \(\hat{u}\in\mathbb{R}^{\ell n_u}\) and \(\hat{y}\in\mathbb{R}^{\ell n_y}\) are the reference input and output trajectory in DPC, respectively. The constraints and objective function in optimization~\eqref{opt: DeePC} ensures that \(\begin{bsmallmatrix}
u_{\text{ini}}\\
u
\end{bsmallmatrix}=\begin{bsmallmatrix}
U_p\\
U_f
\end{bsmallmatrix}g\) and \(\begin{bsmallmatrix}
y_{\text{ini}}\\
y
\end{bsmallmatrix}\approx\begin{bsmallmatrix}
Y_p\\
Y_f
\end{bsmallmatrix}g\), which says all trajectories are approximately linear functions of past data.

\subsection{Poisoning attacks against data-driven predictive control}

We consider a scenario where the online output measurements in optimization~\eqref{opt: DeePC} are corrupted by bounded additive perturbations designed by a malicious attacker. To this end, we start with the following variation of optimization~\eqref{opt: DeePC}:
\begin{equation}\label{opt: DeePC w/ poison}
    \begin{array}{ll}
        \underset{u, y,
        g}{\mbox{minimize}} &  \frac{1}{2}\norm{y-\hat{y}}^2_Q+\frac{1}{2}\norm{u-\hat{u}}^2_R+\lambda_g \norm{Mg}^2\\
        &+\lambda_s\norm{Y_{\text{p}}g-(y_{\text{ini}}+p)}^2\\
         \mbox{subject to} & u_{\text{ini}}=U_{\text{p}}g,\, u = U_{\text{f}} g,\, y = Y_{\text{f}} g,\, u\in\mathbb{U}, \, y\in\mathbb{Y},
        \end{array}
\end{equation}
where \(p\in\mathbb{R}^{\sigma n_y}\) is an attacking perturbation. Notice that optimization~\eqref{opt: DeePC} is a special case of optimization \eqref{opt: DeePC w/ poison} if \(p=0_{\sigma n_y}\). 
We focus on attacks against the online output data in vector \(y_{\text{ini}}\). Unlike the offline data in matrix \(Y_p\) and \(Y_f\), these data are generated in real-time and are more likely to lack thorough validation. Unlike input data, they are sensitive to noninvasive modification of sensors' physical  environment, even when properly encrypted \cite{pajic2016attack}. 

We now introduce the poisoning attack problem against data-driven predictive control, where an attacker seeks the optimal bounded perturbation \(p\) such that the optimal input trajectory in optimization~\eqref{opt: DeePC w/ poison} minimizes a performance function chosen by the attacker. We summarize the definition of this poisoning attack problem as follows. 

\begin{definition}[Poisoning attack problem]\label{def: poison}
Given optimization~\eqref{opt: DeePC w/ poison}, a continuously differentiable cost function \(\psi:\mathbb{R}^{\ell n_u}\to\mathbb{R}\) that evaluates the performance of the attacked trajectory, and a closed convex set \(\mathbb{P}\subset\mathbb{R}^{\sigma n_y}\) for admissible attacking perturbations, the poisoning attack problem seeks the optimal perturbation vector \(p\in\mathbb{P}\) in the following bilevel optimization problem:
\begin{equation}\label{opt: poison}
    \begin{array}{ll}
        \underset{u, y, g, p}{\mbox{minimize}} & \psi(u) \\
        \mbox{subject to} &p\in\mathbb{P},\, \begin{bmatrix} u^\top & y^\top & g^\top\end{bmatrix}^\top \text{ is optimal for \eqref{opt: DeePC w/ poison}.}
    \end{array}
\end{equation}
\end{definition}

As an example of problem~\eqref{opt: poison}, one can let \begin{equation}\label{eqn: psi P}
    \textstyle \psi(u)=\frac{1}{2}\norm{u-\tilde{u}}^2, \enskip \mathbb{P}=\{p\in\mathbb{R}^{\sigma n_y}|\norm{p}\leq \rho\norm{y_{\text{ini}}}\},
\end{equation}
where \(\tilde{u}\) is the attacker's desired input trajectory, and \(\rho\in\mathbb{R}_+\) is the ratio between the norm of the attacking perturbation and the output measurements. In this case, the attacker aims to push the input trajectory computed by DPC towards \(\tilde{u}\) by adding a perturbation \(p\), where the perturbation-to-data ratio is upper bounded by \(\rho\). By choosing different values of \(\rho\), one can evaluate the effects of different attacks against DPC by solving optimization~\eqref{opt: poison}.

We assume that the attacker's objective function only depends on the input trajectory \(u\) and that it has full knowledge of the parameters in optimization~\eqref{opt: DeePC} (such as \(\lambda_s\) and \(\lambda_g\)). The former assumption is because DPC uses only the input trajectory \(u\) to construct the input to the system \cite[Alg. 2]{coulson2019data}. The output trajectory \(y\), on the other hand, is merely a computational byproduct during this construction. The latter assumption ensures that problem~\eqref{opt: DeePC w/ poison} gives the worst-case estimate of the attacker's perturbations. 

If the underlying system is linear time-invariant and the output data are the state data, then, due to Willems' lemma \cite{willems2005note,van2020willems,yu2021controllability},  problem~\eqref{opt: DeePC w/ poison} is equivalent to the attacking of model predictive control. However, problem~\eqref{opt: DeePC w/ poison} also applies to nonlinear systems and systems without state data.


\section{Poisoning attacks via implicit differentiation}
\label{sec: implicit}

We introduce an efficient numerical method to approximately solve the bilevel optimization problem in \eqref{opt: poison}. Our method is based on the implicit function theorem \cite[Thm. 1B.1]{dontchev2014implicit} and a novel form of optimality conditions based on the Minty parameterization theorem \cite[Prop. 23.22]{bauschke2017convex}.

To simplify our notation in this section, we will first rewrite trajectory optimization~\eqref{opt: DeePC w/ poison} in a compact form. To this end, we introduce the following notation:
\begin{equation}\label{eqn: param}
\begin{aligned}
   m & = \ell (n_u+ n_y)+\sigma n_u,\enskip n  = \ell (n_u+ n_y)+n_g,\\
   z & =\begin{bmatrix}
    u^\top & y^\top & g^\top
    \end{bmatrix}^\top, \enskip \mathbb{D}  = \mathbb{U}\times \mathbb{Y}\times \mathbb{R}^{n_g},\\
P&=\blkdiag(R, Q, 2\lambda_g M^\top M+ 2\lambda_s Y_{\text{p}}^\top Y_{\text{p}} ),\\
q(p)&=-\begin{bmatrix}
R\hat{u}^\top & Q\hat{y}^\top & 2\lambda_s (p+y_{\text{ini}})^\top Y_{\text{p}}
\end{bmatrix}^\top,\\
    H & = \begin{bmatrix}
    0_{\sigma n_u\times \ell n_u} & 0_{\sigma n_u\times \ell n_y} & U_{\text{p}} \\
    -I_{\ell n_u} & 0_{\ell n_u\times \ell n_y} & U_{\text{f}} \\
    0_{\ell n_y\times \ell n_u} & -I_{\ell n_y} & Y_{\text{f}} 
    \end{bmatrix}, \enskip    b  = \begin{bmatrix}
    u_{\text{ini}} \\ 0_{\ell( n_u+ n_y)}
    \end{bmatrix},
\end{aligned}
\end{equation}
 where \(P\) is the block diagonal matrix obtained by aligning \(R\) \(Q\), \(2\lambda_g M^\top M+2\lambda_s Y_{\text{p}}^\top Y_{\text{p}}\) along its diagonal.

With the above notation, we can rewrite optimization~\eqref{opt: DeePC w/ poison} in the following form:
\begin{equation}\label{opt: conic}
    \begin{array}{ll}
    \underset{z}{\mbox{minimize}}  & \frac{1}{2}z^\top Pz+q(p)^\top z\\
    \mbox{subject to} & Hz=b,\, z\in\mathbb{D}.
    \end{array}
\end{equation}

Next, we will work with the compact notation in \eqref{opt: conic}, rather than \eqref{opt: DeePC w/ poison}; we remind the readers again that the two are exactly equivalent due to \eqref{eqn: param}.

Throughout, we will make the following assumption on optimization~\eqref{opt: conic}. 
\begin{assumption}\label{asp: KKT} Set \(\mathbb{D}\subseteq \mathbb{R}^n\) is closed and convex. There exists \(z^\star\in\mathbb{R}^n\) and \(w^\star\in\mathbb{R}^m\) such that
\begin{equation}\label{eqn: minmax}
    z^\star\in\, \underset{z\in\mathbb{D}}{\argmin}\, L(z, w^\star),\enskip w^\star\in\, \underset{w}{\argmax}\, L(z^\star, w),
\end{equation}
where \(L(z, w)\coloneqq \frac{1}{2}z^\top Pz+q(p)^\top z+w^\top (Hz-b)\).
\end{assumption}

Under mild constraint qualification conditions on optimization~\eqref{opt: conic} \cite[Cor. 28.3.1]{rockafellar2015convex}, \eqref{eqn: minmax} holds if and only if there exists an optimal solution for optimization~\eqref{opt: conic}. 
 
We are interested in how the perturbation vector \(p\) affects the optimal solutions of optimization~\eqref{opt: conic}.  To this end, we introduce the following definition, which characterizes the mathematical relations between the two. 

\begin{definition}[Solution map and its localization]\label{def: sol map} The solution map of optimization~\eqref{opt: conic} is denoted by
\(S:p\mapsto S(p)\) where \(S(p)\coloneqq \{\begin{bmatrix}
    (z^\star)^\top & (w^\star)^\top
    \end{bmatrix}^\top | \text{Conditions in \eqref{eqn: minmax} hold.} \}\). We say solution map \(S\) has a single-valued localization around \(p\) if there exists a function \(\tilde{S}\) such that \(\tilde{S}(\tilde{p})\in S(\tilde{p})\) for all \(\tilde{p}\) in a neighborhood of \(p\).
\end{definition} 
 
In the following, we will discuss the differentiability properties of solution map \(S\) and its localization.

\subsection{Optimality conditions as nonlinear equations}

The implicit function theorem provides a characterization of the Jacobian of the solution maps of nonlinear equations \cite[Thm. 1B.1]{dontchev2014implicit}. On first look, the implicit function theorem seems not applicable to the solution map \(S\) in Definition~\ref{def: sol map}, since the latter is defined by an optimization problem rather than nonlinear equations. However, the following proposition shows that the optimality conditions in \eqref{eqn: minmax} are actually equivalent to a set of nonlinear equations, laying the groundwork for applying the implicit function theorem to the solution map \(S\).

\begin{lemma}\label{lem: Minty}
Let 
\begin{equation*}
    F\left(\begin{bmatrix}
z^\top & w^\top
\end{bmatrix}^\top, p\right) \coloneqq \begin{bmatrix}
    z-\Pi_{\mathbb{D}}(z -Pz-q(p)-H^\top w) \\
    Hz-b
\end{bmatrix},
\end{equation*}
for all \(z\in\mathbb{R}^n\) and \(w\in\mathbb{R}^m\).
Then the conditions in \eqref{eqn: minmax} are equivalent to the following conditions: \begin{equation}\label{eqn: saddle proj}
    F\left(\begin{bmatrix}
(z^\star)^\top & (w^\star)^\top
\end{bmatrix}^\top, p\right) = 0_{m+n}.
\end{equation}
\end{lemma}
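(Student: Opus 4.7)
The plan is to prove the two sides of the equivalence independently, exploiting the fact that the Lagrangian $L$ decouples nicely: it is convex quadratic in $z$ (since $P \succeq 0$) and affine in $w$. So the two coordinate equations in $F = 0$ correspond to the two optimality conditions in \eqref{eqn: minmax}.

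For the $w$-block, the equation $Hz^\star - b = 0$ matches the second component of $F$. Since $L(z^\star, w)$ is affine in $w$ with $w$ ranging over all of $\mathbb{R}^m$, the set $\arg\max_w L(z^\star, w)$ is nonempty if and only if the coefficient of $w$ vanishes, i.e., $Hz^\star = b$; in that case $L(z^\star, \cdot)$ is constant, so any $w^\star$, including the one from \eqref{eqn: minmax}, is an argmax. This gives the equivalence between the $w$-side of \eqref{eqn: minmax} and the second block of $F = 0$.

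For the $z$-block, I would invoke the projection characterization cited in the paper: for a closed convex set $\mathbb{D}$, $y = \Pi_\mathbb{D}(x)$ holds if and only if $y \in \mathbb{D}$ and $\langle x - y,\, z - y\rangle \leq 0$ for all $z \in \mathbb{D}$ \cite[Prop.~23.22]{bauschke2017convex}. Applied with $y = z^\star$ and $x = z^\star - (Pz^\star + q + H^\top w^\star)$, the fixed-point equation in the first block of $F$ becomes the variational inequality $\langle Pz^\star + q + H^\top w^\star,\, z - z^\star\rangle \geq 0$ for all $z \in \mathbb{D}$, together with the feasibility $z^\star \in \mathbb{D}$. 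Since $\nabla_z L(z^\star, w^\star) = Pz^\star + q + H^\top w^\star$ and $L(\cdot, w^\star)$ is convex in $z$, this variational inequality is the textbook first-order necessary and sufficient condition for $z^\star \in \arg\min_{z \in \mathbb{D}} L(z, w^\star)$.

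I do not anticipate a substantial obstacle: each direction is a direct application of a standard characterization. The only small points to check are that $z^\star \in \mathbb{D}$ is automatic from the range of $\Pi_\mathbb{D}$ (so feasibility need not be stated separately), and that no constraint qualification is needed on the $w$-side because the maximization in \eqref{eqn: minmax} is unconstrained. Combining the two block-wise equivalences then yields \eqref{eqn: minmax} $\Leftrightarrow F = 0$, as claimed.
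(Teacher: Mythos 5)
Your proof is correct, and the block decomposition you use is the same one implicit in the paper's argument; the difference is in the machinery. The paper proves the lemma in two cited steps: first it invokes Rockafellar's saddle-point theorem to rewrite \eqref{eqn: minmax} as the normal-cone inclusion \(Pz^\star+q+H^\top w^\star+N_{\mathbb{D}}(z^\star)\ni 0_n\) together with \(Hz^\star=b\), and then it applies the Minty parameterization of the maximal monotone operator \(N_{\mathbb{D}}\) (whose resolvent is \(\Pi_{\mathbb{D}}\)) to convert the inclusion into the projection fixed-point equation. You instead handle each block by hand: the dual block via the elementary observation that an affine function on \(\mathbb{R}^m\) has a maximizer iff its linear coefficient vanishes, and the primal block via the obtuse-angle (variational-inequality) characterization of \(\Pi_{\mathbb{D}}\) combined with the first-order optimality condition for minimizing the convex function \(L(\cdot,w^\star)\) over \(\mathbb{D}\). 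These are the same facts---the variational inequality \(\langle Pz^\star+q+H^\top w^\star,\,z-z^\star\rangle\ge 0\) for all \(z\in\mathbb{D}\) is precisely the normal-cone inclusion, and the projection characterization is the Minty/resolvent identity specialized to \(N_{\mathbb{D}}\)---so your route is an elementary, self-contained unpacking of the paper's two citations; what it buys is transparency, at the cost of not exhibiting the connection to monotone-operator theory that the paper uses to motivate the fixed-point interpretation in the following remark. One minor point: the projection characterization you state is the standard projection theorem (Bauschke--Combettes, Thm.~3.16), not Prop.~23.22, which is the Minty parameterization the paper cites; the fact itself is of course correct.
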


\begin{proof}
First, the conditions in \eqref{eqn: minmax} are equivalent to the following \cite[Thm. 27.4]{rockafellar2015convex}:
\begin{equation}\label{eqn: saddle}
    Pz^\star+q(p)+H^\top w^\star +N_{\mathbb{D}}(z^\star)  \ni 0_n,\enskip Hz^\star-b =0_m.
\end{equation}
where \(N_{\mathbb{D}}(z^\star)\)is the normal cone of set \(\mathbb{D}\) at \(z^\star\). Next, since the set \(\mathbb{D}\) is nonempty, closed, and convex, \(N_{\mathbb{D}}\) is a maximal monotone operator \cite[Ex. 20.26]{bauschke2017convex}. The rest of the proof follows directly from the Minty parameterization theorem for maximal monotone operators \cite[Prop. 23.22]{bauschke2017convex}.
\end{proof}

Lemma~\ref{lem: Minty} shows that the optimizers in \eqref{eqn: minmax} are the fixed-points of the \emph{proportional integral projected gradient method}, a first-order primal-dual conic optimization method that combines gradient descent with proportional-integral feedback \cite{yu2020proportional,yu2022proportional,yu2023extrapolated}. 

\begin{remark}
Compared with the results in \cite{agrawal2019differentiable}, the optimality conditions in Lemma~\ref{lem: Minty} contain fewer variables by eliminating dual variables for any inequality constraints.
\end{remark}


\subsection{Implicit differentiation through optimality conditions}

Equipped with Lemma~\ref{lem: Minty}, we are ready to present the differentiability properties of the solution map in Definition~\ref{def: sol map} as follows.

\begin{proposition}[Implicit function theorem]\label{prop: implicit}
Consider the solution mapping \(S\) in Definition~\ref{def: sol map}. Suppose Assumption~\ref{asp: KKT} holds and \(z^+ \coloneqq  z^\star-Pz^\star-q-H^\top w^\star\). In addition, suppose function \(\Pi_{\mathbb{D}}\) is continuously differentiable within a neighborhood of \(z^+\). Let
\begin{equation}\label{eqn: Jacobian}
\begin{aligned}
    J\coloneqq & \begin{bmatrix}
    I_n-\partial\Pi_{\mathbb{D}}(z^+)(I_n- P) &  \partial\Pi_{\mathbb{D}}(z^+)H^\top \\
    H & 0_{m\times m}
    \end{bmatrix},\\
    K\coloneqq & \begin{bmatrix}
    \partial\Pi_{\mathbb{D}}(z^+)\begin{bmatrix}
    0_{\sigma n_y\times (\ell n_u+\ell n_y)} & -2\lambda_s Y_{\text{p}}
    \end{bmatrix}^\top \\
    0_{m\times \sigma n_y}
    \end{bmatrix}.
\end{aligned}
\end{equation}
If matrix \(J\) is nonsingular, the solution map \(S\) has a single-valued localization \(\tilde{S}\). Within a neighborhood of \(p\), function \(\tilde{S}\) is continuously differentiable with its Jacobian satisfying:
\begin{equation}
    \partial  \tilde{S}(p)  =-J^{-1}K.
\end{equation}
\end{proposition}

\begin{proof}
Our proof is based on the implicit function theorem \cite[Thm. 1B. 1]{dontchev2014implicit}. To use this theorem, let \(\xi\in\mathbb{R}^{m+n}\) be arbitrary and consider the nonlinear equations \(F(\xi, p)=0_{m+n}\) in \eqref{eqn: saddle proj} and the Jacobian \(\partial_{\xi}F(\xi, p)\) and \(\partial_{p}F(\xi, p)\).
Using the chain rule we can show that \(\partial_{\xi}F(\xi, p)=J\) and \(\partial_{p}F(\xi, p)=K\), where \(J\) and \(K\) are given in \eqref{eqn: Jacobian}. The rest of the proof is a direct application of the implicit function theorem \cite[Thm. 1B. 1]{dontchev2014implicit}. 
\end{proof}

\begin{remark}
Proposition~\ref{prop: implicit} assumes the local differentiability of function \(\Pi_{\mathbb{D}}\). This assumption holds almost everywhere if set \(\mathbb{U}\) and \(\mathbb{Y}\) are Cartesian products of many common closed convex cones; see  \cite[Sec. 3]{busseti2019solution} for an overview. If these sets are intervals, then \(\Pi_{\mathbb{D}}\) is piecewise-linear \cite[Thm. 3.3.14]{bauschke1996projection} and also differentiable almost everywhere.
\end{remark}

\subsection{Poisoning attacks via implicit differentiation}

We now introduce the \emph{approximate poisoning attack problem}, where we approximate the bilevel optimization in \eqref{opt: poison} via linearization and implicit differentiation. 

Due to Definition~\ref{def: sol map}, we know that \(z\in\mathbb{R}^n\) is optimal for \eqref{opt: poison} if and only if there exists \(w\in\mathbb{R}^m\) such that \(\begin{bmatrix} z^\top & w^\top\end{bmatrix}^\top\in S(p)\). Hence \(\begin{bmatrix} u^\top & y^\top & g^\top\end{bmatrix}^\top\) is optimal for optimization~\eqref{opt: poison} if and only if \(u\in TS(p)\), where  \(T=\begin{bmatrix}
I_{\ell n_u} & 0_{\ell n_u\times (m+n-\ell n_u)}
\end{bmatrix}\). Therefore, we 
we can rewrite optimization~\eqref{opt: poison} equivalently as the following one:

\begin{equation}\label{opt: poison w/ map}
    \begin{array}{ll}
        \underset{p\in\mathbb{P}}{\mbox{minimize}} & \psi(TS(p)).
    \end{array}
\end{equation}

Second, suppose that the assumptions in Proposition~\ref{prop: implicit} hold when \(p=0_{\sigma n_y}\). Then, there exists unique \(\xi^\star\coloneqq\begin{bmatrix} (z^\star)^\top & (w^\star)^\top\end{bmatrix}^\top\in\mathbb{R}^{m+n}\) such that \(
    \xi^\star= \tilde{S}(0_{m+n})\),
where \(\tilde{S}\) is the single-valued localization of \(S\) around \(0_{\sigma n_y}\).
Due to the chain rule and Proposition~\ref{prop: implicit}, the following approximation based on Talor series holds for all \(p\approx 0_{\sigma n_y}\):
\begin{equation}\label{eqn: linearization}
\begin{aligned}
     \psi\big(TS(p) \big)=&\psi\big(T\tilde{S}(p) \big)\approx \psi \big(T\xi^\star\big)-\partial\psi\big(T\xi^\star\big) TJ^{-1}Kp,
\end{aligned}
\end{equation}
where matrix \(J\) and \(K\) are given by \eqref{eqn: Jacobian} with
\begin{equation}
    q=-\begin{bmatrix}
R\hat{u}^\top & Q\hat{y}^\top & 2\lambda_s y_{\text{ini}}^\top Y_{\text{p}}
\end{bmatrix}^\top.
\end{equation}
In other words, we let \(p=0_{\sigma n_y}\) in \eqref{eqn: param}. 

By substituting the linear approximation in \eqref{eqn: linearization} into optimization~\eqref{opt: poison w/ map}, we obtain the following \emph{approximate poisoning attack problem}:
\begin{equation}\label{opt: poison approx}
    \begin{array}{ll}
        \underset{p\in\mathbb{P}}{\mbox{minimize}} & - \partial\psi(T\xi^\star) T J^{-1} K p
    \end{array}
\end{equation}

\begin{algorithm}[!ht]
\caption{Poisoning attack via implicit differentiation}
\begin{algorithmic}[1]
\Require The parameters in optimization~\eqref{opt: conic}. 
\State Solve optimization~\eqref{opt: conic} with \(p=0_{\sigma n_y}\) for \(z^\star\) and \(w^\star\) such that \(\xi^\star=\begin{bmatrix} (z^\star)^\top & (w^\star)^\top\end{bmatrix}^\top\) and \(F\big(\xi^\star, 0_{\sigma n_y}\big)=0_{m+n}\).
\State Compute \(J\) and matrix \(K\) using \eqref{eqn: Jacobian} with \(p=0_{\sigma n_y}\).
\State \(p^\star\in\underset{z\in\mathbb{P}}{\argmin}\, - \partial \psi(T\xi^\star) T J^{-1} K p\).\label{opt: cond grad}
\Ensure Perturbation \(p^\star\).
\end{algorithmic}
\label{alg: attack}
\end{algorithm}

Based on the observations above, we present Algorithm~\ref{alg: attack} for computing an approximate solution to the poisoning attack problem in Definition~\ref{def: poison}. Notice that, since matrix \(J\) in \eqref{opt: poison approx} can be singular---or ill-conditioned---we approximate the matrix inverse in \eqref{opt: poison approx} with the corresponding Moore–Penrose pseudoinverse in line~\ref{opt: cond grad}. Evaluating this approximation requires solving a least-squares problem, which is common in implicit differentiation \cite{agrawal2019differentiable}. 

Using Algorithm~\ref{alg: attack}, one can approximately solve bilevel optimization \eqref{opt: poison}, which is NP-hard to solve exactly \cite{bard2013practical}. Implementing Algorithm~\ref{alg: attack} only requires solving the convex trajectory optimization problem in \eqref{opt: conic}, a least-squares problem, and the minimization of a linear function over set \(\mathbb{P}\). Furthermore, Proposition~\ref{prop: implicit} ensures that, under certain local differentiability and nonsingularity assumptions, Algorithm~\ref{alg: attack} gives the optimal solution of a linear approximation of the bilevel optimization in \eqref{opt: poison}.

\section{Numerical experiments}
\label{sec: numerical}

We demonstrate the effectiveness of the perturbations computed by Algorithm~\ref{alg: attack} in the attacking of two different dynamical systems: a linear oscillating masses system, a popular benchmark linear system in optimal control \cite{yu2022proportional}, and the nonlinear quadrotor dynamics in robotics simulator PyBullet \cite{panerati2021learning}.

\subsection{DPC trajectory optimization setup}
Furthermore, based on the observations made in DPC literature \cite{elokda2021data,dorfler2022bridging}, we choose \(n_g=500\), \(\sigma=6\), \(\ell=25\), \(\lambda_s=10^6\), and \(\lambda_g=100\) in optimization~\eqref{opt: DeePC w/ poison}. Throughout we solve optimization~\eqref{opt: DeePC w/ poison} using ECOS \cite{domahidi2013ecos} with these parameters. We consider the following two systems.

\paragraph{Oscillating masses system}

We consider the following oscillating-masses system:\[
    x_{k+1}=\exp\left(\Delta A\right)x_k+\textstyle \int_{0}^\Delta \exp\left(sA\right)\mathrm{d}s Bu_k,\enskip y_k =  x_k,\]
where \(A=\begin{bsmallmatrix}
    0 & 0 & 1 & 0\\
    0 & 0 & 0 & 1\\
    -2 & 1 & 0 & 0\\
    1 & -2 & 0 & 0
    \end{bsmallmatrix}\), \(B=\begin{bsmallmatrix}
    0 & 0\\
    0 & 0\\
    1 & 0\\
    0 & 1
    \end{bsmallmatrix}\), \(\exp\) denotes the matrix exponential, and \(\Delta=0.1\) is the sampling time period. This system has \(n_u=2\) and \(n_y=4\), and describes the dynamics of two unit masses connected by springs with unit spring constants; see Fig.~\ref{fig: two systems} for an illustration. When solving optimization~\eqref{opt: DeePC w/ poison} with this system, we let \(Q=10I_{4}\), \(R=I_2\), and \(\hat{y}\) denote the stationary point with positive unit displacement (see Fig.~\ref{fig: masses} for an illustration). We choose \(\mathbb{U}\) and \(\mathbb{Y}\) in \eqref{opt: DeePC w/ poison} such that the input and output are elementwise bounded within the interval \([-1, 1]\) and \([-5, 5]\), respectively.

\begin{figure}[!hbt]
\centering
  \begin{subfigure}[b]{0.49\columnwidth}
  \centering
  \begin{adjustbox}{scale=0.5}

\begin{circuitikz}

\pattern[pattern=north east lines] (-0.8,-0.2) rectangle (0.8, 0);
\draw[thick] (-0.8, 0) -- (0.8, 0);
\draw (0, 0) to[spring] (0, 1);
\draw[thick] (-0.5, 1) rectangle (0.5, 2);
\draw (0, 2) to[spring] (0, 3);
\draw[thick] (-0.5, 3) rectangle (0.5, 4);
\draw (0, 4) to[spring] (0, 5);
\pattern[pattern=north east lines] (-0.8, 5) rectangle (0.8, 5.2);
\draw[thick] (-0.8, 5) -- (0.8, 5);

\end{circuitikz}

\end{adjustbox}
  \caption{Oscillating masses}
  \end{subfigure}
  \hfill
  \begin{subfigure}[b]{0.49\columnwidth}
  \centering
  \includegraphics[trim=0cm 0cm 0cm 1.5cm,width=0.8\textwidth]{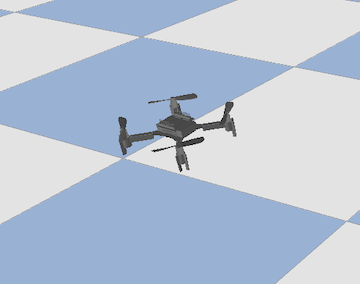}
  \caption{PyBullet quadrotor}
  \end{subfigure} 
  \caption{The two systems controlled by DPC.}
  \label{fig: two systems}
\end{figure}
\paragraph{PyBullet quadrotor system}
We also consider the nonlinear quadrotor system in PyBullet \cite{panerati2021learning}. This system does not have an explicit description via differential equations. Instead, it is a black-box system composed of pre-tuned PID controllers and 6-degree-of-freedom (6DoF) quadrotor dynamics; see Fig.~\ref{fig: quad system} for an illustration. The input of the system is a \(3\)-dimensional velocity command vector, and the output of the system is a \(6\)-dimensional vector containing the position and orientation angles of the quadrotor. Both input and output are measured at every \(0.04\) second. When solving optimization~\eqref{opt: DeePC w/ poison} with this system, we let \(Q=\blkdiag(10I_3, 0_{3\times 3})\)---\ie, positive weights on position outputs, zero weights on attitude outputs---and \(R=I_3\). We let \(\hat{y}\) denote a circular trajectory in the xy-plane with radius \(0.3\); see Fig.~\ref{fig: quad} for an illustration. In addition, we choose \(\mathbb{U}\) and \(\mathbb{Y}\) in \eqref{opt: DeePC w/ poison} such that the input is elementwise bounded within the interval \([-1, 1]\), the position output is bounded within the interval \([-2, 2]\), \([-2, 2]\), and \([0, 2]\) along the x-axis, y-axis, and z-axis, respectively; the attitude angle output is elementwise bounded within the interval \([-\frac{\pi}{8}, \frac{\pi}{8}]\). 

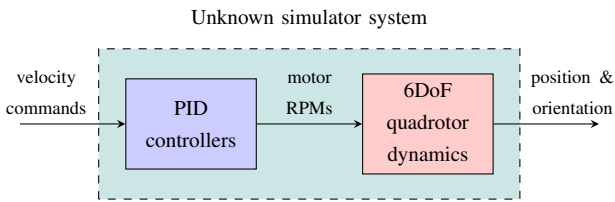
\begin{figure}[!ht]
    \centering
    \begin{tikzpicture}
 
\node (input) at (-3.97,0) {};

\draw [dashed, fill=teal!20] (-2.8,-1) rectangle (2.8,1);
 
\node [draw,
    fill=blue!20,
    text width=1.5cm, align=center,
    minimum height=1.2cm,
    right=1.4cm of input
]  (pid) { \footnotesize PID
controllers};

\node [draw,
    fill=red!20, 
    text width=1.5cm, align=center,
    minimum height=1.2cm,
    right=1.4cm of pid
] (quad) {\footnotesize 6DoF quadrotor dynamics};

\node[right=1.4cm of quad
] (output) {};

\node (system) at (0, 1.4) {\footnotesize Unknown simulator system};

\draw[-stealth] (pid.east) -- (quad.west) 
    node[midway,text width=1cm, align=center, above]{\scriptsize motor RPMs};
\draw[-stealth] (input) -- (pid.west) 
    node[near start,text width=1.2cm, align=center, above]{\scriptsize velocity commands};
\draw[-stealth] (quad.east) -- (output) node[near end, text width=1.2cm, align=center, above]{\scriptsize position \& orientation};     

\end{tikzpicture} 
    \caption{The structure of the nonlinear system for quadrotor dynamics in the PyBullet simulator.}
    \label{fig: quad system}
\end{figure}
\subsection{Poisoning attacks setup}

We construct the poisoning attack problem in \eqref{opt: poison} where  function \(\psi(u)\) and set \(\mathbb{P}\) are chosen according to \eqref{eqn: psi P}. For the oscillating masses system, we choose \(\tilde{u}\) to be sinusoidal signals with unit amplitude and unit angular frequency; for the PyBullet quadrotor system, we choose \(\tilde{u}\) to be a velocity command in the positive \(y\) direction.

\subsection{Numerical results}

We demonstrate the output tracking error---the difference between the system's output  trajectory and the reference trajectory---of the oscillating masses system and PyBullet quadrotor system under different poisoning attacks in Fig.~\ref{fig: masses} and Fig.~\ref{fig: quad}, respectively. 
These trajectories are simulated using the input optimized by solving \eqref{opt: DeePC w/ poison} every \(10\) steps with attacks added to output data \(y_{\text{ini}}\). As a benchmark in numerical experiments, we also consider random perturbation as \(p_{\text{rand}}=(\rho\norm{y_{\text{ini}}}/\norm{v}) v\)
where each element in \(v\in\mathbb{R}^{\sigma n_y}\) is sampled independently from the standard Gaussian distribution. From these results, we can observe that: 1) compared with the linear oscillating masses, DPC is more sensitive to perturbation when applied to the nonlinear PyBullet quadrotor system, and 2) when applied to the PyBullet quadrotor system, the perturbations computed by Algorithm~\ref{alg: attack}  increase the tracking error by more than ten times higher than those caused by random perturbations.

We also compare the efficiency of the implicit differentiation step used in Algorithm~\ref{alg: attack} against off-the-shelf software CVXPY layer \cite{agrawal2019differentiable} in terms of the dimension of the least squares problem---in Algorithm~\ref{alg: attack}, this problem is solved when evaluating the pseudoinverse in line~\ref{opt: cond grad}---they solve, which are summarized in Tab.~\ref{tab: ls size}. These results show that, since we do not rely on homogeneous self-dual embedding, the size of the least-squares problem we solve in Algorithm~\ref{alg: attack}---which equals \(2\ell(n_u+n_y)+\sigma n_u+n_g\), \ie, the number of equations in \eqref{eqn: saddle proj})---is less than half the size of the one solved in CVXPYlayers, which is around \(5\ell(n_u+n_y)+\sigma n_u+2n_g\), \ie, the size of the homogeneous self-dual embedding (HSDE) for optimization~\eqref{opt: DeePC w/ poison}\footnote{Since CVXPYlayers introduce some internal auxiliary variables, the size of the HSDE is slightly higher than \(5\ell(n_u+n_y)+\sigma n_u+2n_g\).}. 

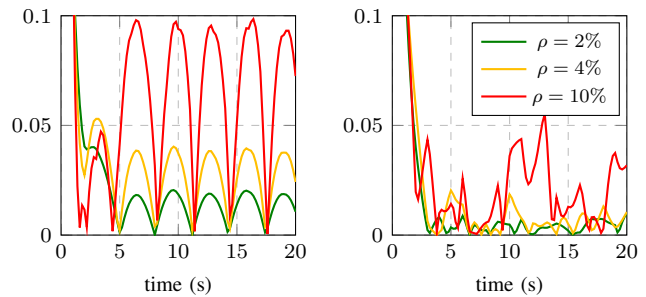
\begin{figure}[!ht]
\centering
    \begin{subfigure}[b]{0.49\columnwidth}
    \centering
    \begin{tikzpicture}[scale=1]
    \begin{axis}[
            legend style={font=\footnotesize},
            xlabel={time (s)},
            label style={font=\footnotesize},
            tick label style={font=\footnotesize},
            xmin=0, xmax=20,
            ymin=0, ymax=0.1,
            width=4.7cm, height = 4.5cm, 
            xtick={0,5,10,15,20},
            ytick={0,0.05,0.10},
            yticklabel style={/pgf/number format/fixed},
            legend image post style={scale=0.5},
            legend pos=north east,
            xmajorgrids=true,
            ymajorgrids=true,
            grid style=dashed,
    ]
    \addplot [mygreen, thick] table [x=iter, y=rho1, col sep=space] {mass_poi.csv};
    \addplot [myyellow, thick] table [x=iter, y=rho2, col sep=space] {mass_poi.csv};
    \addplot [red, thick] table [x=iter, y=rho3, col sep=space] {mass_poi.csv};
    
    \end{axis}
    \end{tikzpicture}
    \caption{Perturbations via Alg.~\ref{alg: attack}.}\label{fig: mass attack}
    \end{subfigure}
    \hfill
    \begin{subfigure}[b]{0.49\columnwidth}
    \centering
    \begin{tikzpicture}[scale=1]
    \begin{axis}[
            legend style={font=\footnotesize},
            xlabel={time (s)},
            label style={font=\footnotesize},
            tick label style={font=\footnotesize},
            xmin=0, xmax=20,
            ymin=0, ymax=0.1,
            width=4.7cm, height = 4.5cm, 
            xtick={0,5,10,15,20},
            ytick={0,0.05,0.10},
            yticklabel style={/pgf/number format/fixed},
           legend style={nodes={scale=0.9, transform shape}},
            legend pos=north east,
            xmajorgrids=true,
            ymajorgrids=true,
            grid style=dashed,
    ]
    \addplot [mygreen, thick] table [x=iter, y=rho1, col sep=space] {mass_rand.csv};
    \addlegendentry{$\rho=2\%$}
    
    \addplot [myyellow, thick] table [x=iter, y=rho2, col sep=space] {mass_rand.csv};
    \addlegendentry{$\rho=4\%$}
    
    \addplot [red, thick] table [x=iter, y=rho3, col sep=space] {mass_rand.csv};
    \addlegendentry{$\rho=10\%$}
    
    \end{axis}
    \end{tikzpicture}
    \caption{Random perturbations.}\label{fig: mass random}
    \end{subfigure}
    \caption{The tracking error of the first mass's displacement in the oscillating masses system.}\label{fig: masses}
\end{figure}

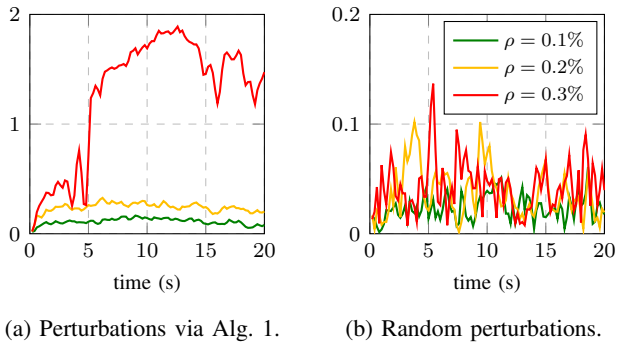
\begin{figure}[!ht]
\centering
    \begin{subfigure}[b]{0.49\columnwidth}
    \centering
    \begin{tikzpicture}[scale=1]
    \begin{axis}[
            legend style={font=\footnotesize},
            xlabel={time (s)},
            label style={font=\footnotesize},
            tick label style={font=\footnotesize},
            xmin=0, xmax=20,
            ymin=0, ymax=2,
            width=4.7cm, height = 4.5cm, 
            xtick={0,5,10,15,20},
            ytick={0,1,2},
            yticklabel style={/pgf/number format/fixed},
            legend image post style={scale=0.5},
            legend pos=north east,
            xmajorgrids=true,
            ymajorgrids=true,
            grid style=dashed,
    ]
    \addplot [mygreen, thick] table [x=iter, y=rho1, col sep=space] {quad_poi.csv};
    \addplot [myyellow, thick] table [x=iter, y=rho2, col sep=space] {quad_poi.csv};
    \addplot [red, thick] table [x=iter, y=rho3, col sep=space] {quad_poi.csv};

    \end{axis}
    \end{tikzpicture}
    \caption{Perturbations via Alg.~\ref{alg: attack}.}\label{fig: quad attack}
    \end{subfigure}
    \hfill
    \begin{subfigure}[b]{0.49\columnwidth}
    \centering
    \begin{tikzpicture}[scale=1]
    \begin{axis}[
            legend style={font=\footnotesize},
            xlabel={time (s)},
            label style={font=\footnotesize},
            tick label style={font=\footnotesize},
            xmin=0, xmax=20,
            ymin=0, ymax=0.2,
            width=4.7cm, height = 4.5cm, 
            xtick={0,5,10,15,20},
            ytick={0,0.1,0.2},
            yticklabel style={/pgf/number format/fixed},
           legend style={nodes={scale=0.9, transform shape}},
            legend pos=north east,
            xmajorgrids=true,
            ymajorgrids=true,
            grid style=dashed,
    ]
    \addplot [mygreen, thick] table [x=iter, y=rho1, col sep=space] {quad_rand.csv};
    \addlegendentry{$\rho=0.1\%$}
    
    \addplot [myyellow, thick] table [x=iter, y=rho2, col sep=space] {quad_rand.csv};
    \addlegendentry{$\rho=0.2\%$}
    
    \addplot [red, thick] table [x=iter, y=rho3, col sep=space] {quad_rand.csv};
    \addlegendentry{$\rho=0.3\%$}
    
    \end{axis}
    \end{tikzpicture}
    \caption{Random perturbations.}\label{fig: quad random}
    \end{subfigure}
    \caption{The tracking error of the quadrotor position in PyBullet simulator.}\label{fig: quad}
\end{figure}

\begin{table}[!ht]
\caption{The size of the least-squares problem solved when differentiating optimization~\eqref{opt: DeePC w/ poison}.}
\centering
   \begin{tabular}{ c|c|ccc } 
\hline
\multicolumn{2}{c|}{Trajectory length \(\ell\)} & 25 & 50 & 100\\
\hline
\multirow{2}{*}{Oscillating  masses} &
CVXPY layer & 1799 & 2549 & 4049\\
\cline{2-5}
& Algorithm~\ref{alg: attack} & \textbf{812} & \textbf{1112} & \textbf{1712}\\
\hline
\multirow{2}{*}{PyBullet quadrotor} &
CVXPY layer & 2192 & 3317 & 5567\\
\cline{2-5}
& Algorithm~\ref{alg: attack}& \textbf{968} & \textbf{1418} & \textbf{2318}\\
\hline
\end{tabular}

\label{tab: ls size}
\end{table}
\section{Conclusion}
\label{sec: conclusion}

We study online poisoning attack problems in DPC. We develop an efficient numerical method to compute the data poisoning attacks based on implicit differentiation. Our future directions include attacks with partial knowledge of the parameters in DPC, as well as parameter design methods for DPC, where poisoning attacks act as a subroutine to evaluate the worst-case performance of any given parameter.





\bibliographystyle{IEEEtran}
\bibliography{IEEEabrv,reference}

\begin{thebibliography}{10}
\providecommand{\url}[1]{#1}
\csname url@rmstyle\endcsname
\providecommand{\newblock}{\relax}
\providecommand{\bibinfo}[2]{#2}
\providecommand\BIBentrySTDinterwordspacing{\spaceskip=0pt\relax}
\providecommand\BIBentryALTinterwordstretchfactor{4}
\providecommand\BIBentryALTinterwordspacing{\spaceskip=\fontdimen2\font plus
\BIBentryALTinterwordstretchfactor\fontdimen3\font minus
  \fontdimen4\font\relax}
\providecommand\BIBforeignlanguage[2]{{%
\expandafter\ifx\csname l@#1\endcsname\relax
\typeout{** WARNING: IEEEtran.bst: No hyphenation pattern has been}%
\typeout{** loaded for the language `#1'. Using the pattern for}%
\typeout{** the default language instead.}%
\else
\language=\csname l@#1\endcsname
\fi
#2}}

\bibitem{coulson2019data}
J.~Coulson, J.~Lygeros, and F.~D{\"o}rfler, ``Data-enabled predictive control:
  In the shallows of the deepc,'' in \emph{Proc. Eur. Control Conf.}\hskip 1em
  plus 0.5em minus 0.4em\relax IEEE, 2019, pp. 307--312.

\bibitem{allibhoy2020data}
A.~Allibhoy and J.~Cort{\'e}s, ``Data-based receding horizon control of linear
  network systems,'' \emph{IEEE Control Systems Letters}, vol.~5, no.~4, pp.
  1207--1212, 2020.

\bibitem{dorfler2022bridging}
F.~Dorfler, J.~Coulson, and I.~Markovsky, ``Bridging direct \& indirect
  data-driven control formulations via regularizations and relaxations,''
  \emph{IEEE Trans. Autom. Control}, 2022.

\bibitem{willems2005note}
J.~C. Willems, P.~Rapisarda, I.~Markovsky, and B.~L. De~Moor, ``A note on
  persistency of excitation,'' \emph{Syst. Control Lett.}, vol.~54, no.~4, pp.
  325--329, 2005.

\bibitem{van2020willems}
H.~J. van Waarde, C.~De~Persis, M.~K. Camlibel, and P.~Tesi, ``Willems’
  fundamental lemma for state-space systems and its extension to multiple
  datasets,'' \emph{IEEE Control Syst. Lett.}, vol.~4, no.~3, pp. 602--607,
  2020.

\bibitem{yu2021controllability}
Y.~Yu, S.~Talebi, H.~J. van Waarde, U.~Topcu, M.~Mesbahi, and
  B.~A{\c{c}}{\i}kmeșe, ``On controllability and persistency of excitation in
  data-driven control: Extensions of {W}illems’ fundamental lemma,'' in
  \emph{Proc. IEEE Conf. Decision Control}.\hskip 1em plus 0.5em minus
  0.4em\relax IEEE, 2021, pp. 6485--6490.

\bibitem{mayne2000constrained}
D.~Q. Mayne, J.~B. Rawlings, C.~V. Rao, and P.~O. Scokaert, ``Constrained model
  predictive control: Stability and optimality,'' \emph{Automatica}, vol.~36,
  no.~6, pp. 789--814, 2000.

\bibitem{mayne2014model}
D.~Q. Mayne, ``Model predictive control: Recent developments and future
  promise,'' \emph{Automatica}, vol.~50, no.~12, pp. 2967--2986, 2014.

\bibitem{elokda2021data}
E.~Elokda, J.~Coulson, P.~N. Beuchat, J.~Lygeros, and F.~D{\"o}rfler,
  ``Data-enabled predictive control for quadcopters,'' \emph{Int. J. Robust
  Nonlinear Control}, vol.~31, no.~18, pp. 8916--8936, 2021.

\bibitem{huang2021decentralized}
L.~Huang, J.~Coulson, J.~Lygeros, and F.~D{\"o}rfler, ``Decentralized
  data-enabled predictive control for power system oscillation damping,''
  \emph{IEEE Trans Control Syst. Technol.}, vol.~30, no.~3, pp. 1065--1077,
  2021.

\bibitem{chinde2022data}
V.~Chinde, Y.~Lin, and M.~J. Ellis, ``Data-enabled predictive control for
  building {HVAC} systems,'' \emph{J. Dyn. Syst. Meas. Control}, vol. 144,
  no.~8, p. 081001, 2022.

\bibitem{goodfellow2014explaining}
I.~J. Goodfellow, J.~Shlens, and C.~Szegedy, ``Explaining and harnessing
  adversarial examples,'' \emph{arXiv preprint arXiv:1412.6572 [stat.ML]},
  2014.

\bibitem{kurakin2016adversarial}
A.~Kurakin, I.~Goodfellow, and S.~Bengio, ``Adversarial machine learning at
  scale,'' \emph{arXiv preprint arXiv:1611.01236 [cs.CV]}, 2016.

\bibitem{sharma2019attacks}
P.~Sharma, D.~Austin, and H.~Liu, ``Attacks on machine learning: Adversarial
  examples in connected and autonomous vehicles,'' in \emph{Proc. IEEE Int.
  Symp. Technol. Homeland Secur.}\hskip 1em plus 0.5em minus 0.4em\relax IEEE,
  2019, pp. 1--7.

\bibitem{agarwal2022task}
S.~Agarwal and S.~Chinchali, ``Task-driven data augmentation for vision-based
  robotic control,'' in \emph{Proc. Conf. Robot Learn. (to appear)}.\hskip 1em
  plus 0.5em minus 0.4em\relax PMLR, 2022.

\bibitem{alpago2020extended}
D.~Alpago, F.~D{\"o}rfler, and J.~Lygeros, ``An extended kalman filter for
  data-enabled predictive control,'' \emph{IEEE Control Syst. Lett.}, vol.~4,
  no.~4, pp. 994--999, 2020.

\bibitem{coulson2021distributionally}
J.~Coulson, J.~Lygeros, and F.~Dorfler, ``Distributionally robust chance
  constrained data-enabled predictive control,'' \emph{IEEE Trans Autom.
  Control}, 2021.

\bibitem{pajic2016attack}
M.~Pajic, I.~Lee, and G.~J. Pappas, ``Attack-resilient state estimation for
  noisy dynamical systems,'' \emph{IEEE Trans. Control Netw. Syst.}, vol.~4,
  no.~1, pp. 82--92, 2016.

\bibitem{miao2016coding}
F.~Miao, Q.~Zhu, M.~Pajic, and G.~J. Pappas, ``Coding schemes for securing
  cyber-physical systems against stealthy data injection attacks,'' \emph{IEEE
  Trans. Control Netw. Syst.}, vol.~4, no.~1, pp. 106--117, 2016.

\bibitem{pajic2017design}
M.~Pajic, J.~Weimer, N.~Bezzo, O.~Sokolsky, G.~J. Pappas, and I.~Lee, ``Design
  and implementation of attack-resilient cyberphysical systems: With a focus on
  attack-resilient state estimators,'' \emph{IEEE Control Syst. Mag.}, vol.~37,
  no.~2, pp. 66--81, 2017.

\bibitem{jovanov2019relaxing}
I.~Jovanov and M.~Pajic, ``Relaxing integrity requirements for attack-resilient
  cyber-physical systems,'' \emph{IEEE Trans. Autom. Control}, vol.~64, no.~12,
  pp. 4843--4858, 2019.

\bibitem{russo2021poisoning}
A.~Russo and A.~Proutiere, ``Poisoning attacks against data-driven control
  methods,'' in \emph{Proc. Amer. Control Conf.}\hskip 1em plus 0.5em minus
  0.4em\relax IEEE, 2021, pp. 3234--3241.

\bibitem{russo2021data}
A.~Russo, M.~Molinari, and A.~Proutiere, ``Data-driven control and
  data-poisoning attacks in buildings: the kth live-in lab case study,'' in
  \emph{Proc. Mediterranean Conf. Control Automat.}\hskip 1em plus 0.5em minus
  0.4em\relax IEEE, 2021, pp. 53--58.

\bibitem{panerati2021learning}
J.~Panerati, H.~Zheng, S.~Zhou, J.~Xu, A.~Prorok, and A.~P. Schoellig,
  ``Learning to fly—a gym environment with {P}y{B}ullet physics for
  reinforcement learning of multi-agent quadcopter control,'' in \emph{Proc.
  IEEE/RSJ Int. Conf. Intell. Robots Syst.}\hskip 1em plus 0.5em minus
  0.4em\relax IEEE, 2021, pp. 7512--7519.

\bibitem{agrawal2019differentiable}
A.~Agrawal, B.~Amos, S.~Barratt, S.~Boyd, S.~Diamond, and J.~Z. Kolter,
  ``Differentiable convex optimization layers,'' \emph{Adv. Neural Inf.
  Process. Syst.}, vol.~32, 2019.

\bibitem{berberich2020data}
J.~Berberich, J.~K{\"o}hler, M.~A. M{\"u}ller, and F.~Allg{\"o}wer,
  ``Data-driven model predictive control with stability and robustness
  guarantees,'' \emph{IEEE Trans. Autom. Control}, vol.~66, no.~4, pp.
  1702--1717, 2020.

\bibitem{berberich2022linear}
------, ``Linear tracking mpc for nonlinear systems—part ii: The data-driven
  case,'' \emph{IEEE Trans. Autom. Control}, vol.~67, no.~9, pp. 4406--4421,
  2022.

\bibitem{dontchev2014implicit}
A.~L. Dontchev and R.~T. Rockafellar, \emph{Implicit Functions and Solution
  Mappings: A View from Variational Analysis}.\hskip 1em plus 0.5em minus
  0.4em\relax Springer, 2014.

\bibitem{bauschke2017convex}
H.~H. Bauschke and P.~L. Combettes, \emph{Convex Analysis and Monotone Operator
  Theory in Hilbert Spaces}.\hskip 1em plus 0.5em minus 0.4em\relax Springer,
  2017.

\bibitem{rockafellar2015convex}
R.~T. Rockafellar, \emph{Convex Analysis}.\hskip 1em plus 0.5em minus
  0.4em\relax Princeton University Press, 1970.

\bibitem{yu2020proportional}
Y.~Yu, P.~Elango, and B.~A{\c{c}}{\i}kme{\c{s}}e, ``Proportional-integral
  projected gradient method for model predictive control,'' \emph{IEEE Control
  Syst. Lett.}, vol.~5, no.~6, pp. 2174--2179, 2020.

\bibitem{yu2022proportional}
Y.~Yu, P.~Elango, U.~Topcu, and B.~A{\c{c}}{\i}kme{\c{s}}e,
  ``Proportional--integral projected gradient method for conic optimization,''
  \emph{Automatica}, vol. 142, p. 110359, 2022.

\bibitem{yu2023extrapolated}
Y.~Yu, P.~Elango, B.~A{\c{c}}{\i}kme{\c{s}}e, and U.~Topcu, ``Extrapolated
  proportional-integral projected gradient method for conic optimization,''
  \emph{{IEEE} Control Syst. Lett.}, vol.~7, pp. 73--78, 2023.

\bibitem{busseti2019solution}
E.~Busseti, W.~M. Moursi, and S.~Boyd, ``Solution refinement at regular points
  of conic problems,'' \emph{Comput. Optim. Appl.}, vol.~74, no.~3, pp.
  627--643, 2019.

\bibitem{bauschke1996projection}
H.~H. Bauschke, ``Projection algorithms and monotone operators,'' Ph.D.
  dissertation, Theses (Dept. of Mathematics and Statistics)/Simon Fraser
  University, 1996.

\bibitem{bard2013practical}
J.~F. Bard, \emph{Practical bilevel optimization: algorithms and
  applications}.\hskip 1em plus 0.5em minus 0.4em\relax Springer Science \&
  Business Media, 2013, vol.~30.

\bibitem{domahidi2013ecos}
A.~Domahidi, E.~Chu, and S.~Boyd, ``{ECOS}: An {SOCP} solver for embedded
  systems,'' in \emph{Proc. Eur. Control Conf.}\hskip 1em plus 0.5em minus
  0.4em\relax IEEE, 2013, pp. 3071--3076.

\end{thebibliography}

\end{document}